%
\documentclass[usenatbib,useAMS,referee]{biom}
%


\def\bSig\mathbf{\Sigma}

\usepackage[figuresright]{rotating}
\usepackage{lineno,hyperref,tikz,amsmath,multirow,amssymb,ulem} 
\usepackage{algorithm}
\usepackage{algorithmic}

\makeatletter
\let\c@lofdepth\relax
\let\c@lotdepth\relax
\let\c@subfigure\relax
\let\l@subfigure\relax
\let\c@subtable\relax
\let\l@subtable\relax
\let\@listsubcaptions\relax
\let\@dottedxxxline\relax
\let\subfloat@label\relax
\let\sf@@sub@label\relax

\makeatother
\usepackage{bm}
\usepackage{graphicx}
\usepackage{float}
\usepackage{subfig}

\usepackage{multirow,threeparttable}   
\usepackage{fancybox}
\usepackage{hyperref} 


\title[ZIPG-SK]{Variable Selection for Multi-Source Count Data with Controlled False Discovery Rate}



\author{
Shan Tang$^{1}$,
Shanjun Mao$^{2}$,
Shourong Ma$^{2,\dagger}$, and 
Falong Tan$^{2,\dagger}$\email{falongtan@hnu.edu.cn} \\
$^{1}$School of Mathematics and Statistics, Wuhan University of Technology, 430070, Wuhan, China\\
$^{2}$Department of Statistics and Data Science, Hunan University, 410006, Changsha, China 
}





\begin{document}

\date{{\it Received October} 2007. {\it Revised February} 2008.  {\it
Accepted March} 2008.}

\pagerange{\pageref{firstpage}--\pageref{lastpage}} 
\volume{64}
\pubyear{2008}
\artmonth{December}

\doi{10.1111/j.1541-0420.2005.00454.x}

\label{firstpage}

\begin{abstract}
The rapid generation of complex, highly skewed, and zero-inflated multi-source count data poses significant challenges for variable selection,  particularly in biomedical domains like tumor development and metabolic dysregulation. To address this, we propose a new variable selection method, Zero-Inflated Poisson-Gamma Simultaneous Knockoff (ZIPG-SK), specifically designed for multi-source count data. Our method leverages a Gaussian copula based on the Zero-Inflated Poisson-Gamma (ZIPG) distribution to construct knockoffs that properly account for the properties of count data, including high skewness and zero inflation, while effectively incorporating covariate information. This framework enables the detection of common features across multi-source datasets with guaranteed false discovery rate (FDR) control. Furthermore, we enhance the power of the method by incorporating e-value aggregation, which effectively mitigates the inherent randomness in knockoff generation. Through extensive simulations, we demonstrate that ZIPG-SK significantly outperforms existing methods, achieving superior power across various scenarios. We validate the utility of our method on real-world colorectal cancer (CRC) and type 2 diabetes (T2D) datasets, identifying key variables whose characteristics align with established findings and simultaneously provide new mechanistic insights. \\
\end{abstract}

\begin{keywords}
    E-value Aggregation, Gaussian Copula, Multi-source Count Data, Variable Selection, ZIPG-SK 
\end{keywords}

\maketitle

\vspace{-6pt}
\section{Introduction} \label{s:intro}

\vspace{-6pt}
Advances in sequencing technology have enabled the collection of large-scale, high-dimensional data from diverse sources, presenting both significant opportunities and analytical challenges in fields like microbiology and genetics \citep{ko2022metagenomics,larson2023clinician}. These datasets, exemplified by gut microbiome and single-cell sequencing, often contain a far greater number of variables than samples, with only a small subset truly relevant to phenotypic outcomes. This high-dimensionality underscores the crucial challenge of accurate variable selection \citep{candes2018panning}. The complexity is further compounded by the need to integrate data from multiple independent sources to uncover shared biological mechanisms. For instance, studies on the gut microbiome require combining data from different geographic cohorts to investigate its relationship with diseases like colorectal cancer (CRC) \citep{liu2022multi}. 
In addition to the inherent high-dimensional nature of sequencing data, many datasets, such as gut microbiome and single-cell sequencing data, are frequently represented as count data, which raises unique statistical challenges. 

Existing approaches for variable selection in such settings can be broadly grouped into two categories. The first involves direct analysis of count data, where a common strategy in single-cell RNA sequencing is to cluster cells and then perform statistical tests for differential gene expression to identify marker genes \citep{petegrosso2020machine}. For example, \citet{heumos2023best} applied the Clustifyr package for cell classification and subsequently used the Wilcoxon rank-sum test for differential gene detection. The second category involves transforming count data into compositional data prior to selection. \citet{lin2014variable} proposed a regularization method based on a log-contrast model, which has been applied in microbiome studies to uncover associations such as between body mass index (BMI) and gut microbial composition. Despite their usefulness, both approaches have limitations. Differential expression analysis after clustering may suffer from “double dipping”, i.e., using the same data for clustering and testing, which can inflate false discoveries \citep{song2023clusterde}, while compositional transformations may introduce bias through log-ratio conversion and the handling of zeros, potentially obscuring sequencing depth information. Beyond the count-data challenge, the issue of multi-source integration further complicates variable selection. A common approach to variable selection across multiple sources is to either pool datasets into a single analysis or intersect the variable sets obtained from individual sources \citep{chen2020ensemble}. However, pooling may amplify biases caused by heterogeneity across cohorts \citep{mary2007biases}, while intersecting can be overly conservative and exclude relevant features \citep{lee2019confounder}. These challenges underscore the urgent need for more robust methods tailored to variable selection in multi-source sequencing count data.

The knockoff method has emerged as a powerful approach for high-dimensional variable selection, constructing a set of “knockoff” variables to effectively control the false discovery rate (FDR) \citep{barber2015controlling}. This approach has been applied to microbiome data; for example, \citet{monti2022robust} proposed a two-step robust knockoff filter for compositional microbiome data that controls the FDR even in the presence of contamination. Model-X knockoff methods have been developed for high-dimensional settings \citep{candes2018panning}. While the original knockoff framework is based on second-order statistics, many Model-X implementations inherit gaussian assumptions, making them unsuitable for skewed and zero-inflated count data \citep{burger2020robust}. Alternative strategies, such as constructing knockoffs under assumed distributions (e.g., negative binomial) with Gaussian copula-based fitting \citep{song2023clusterde}, partially address these challenges but do not fully account for zero inflation or multi-source integration. Simultaneous Knockoffs (SK) \citep{dai2023false} further extend Model-X framework to multi-source settings, enabling detection of shared signals, but they rely on the same gaussian-based assumptions and limitations. Therefore, developing knockoff-based methods that can simultaneously handle the unique characteristics of count data and integrate information from multiple sources remains an important direction for future research.

In this paper, we propose a knockoff-based framework, Zero-Inflated Poisson-Gamma Simultaneous Knockoff (ZIPG-SK), for variable selection in multi-source count data. The method’s key contributions are threefold: (i) it models the highly skewed and zero-inflated nature of count data using a Zero-Inflated Poisson-Gamma (ZIPG) distribution \citep{jiang2023flexible} and generates knockoffs via a Gaussian copula, incorporating relevant covariates to improve statistical power; (ii) it builds on the SK framework to integrate knockoffs from multiple independent sources, adapting the method for zero-inflated count data; and (iii) it enhances FDR control and statistical power by repeating knockoff construction and incorporating e-values \citep{ren2024derandomised}, which mitigates randomness in knockoff generation. We theoretically establish the control of the FDR within our proposed framework. Through extensive simulations, ZIPG-SK demonstrates superior performance compared with existing variable selection methods for count data across diverse scenarios. Finally, we validate the approach on two real-world multi-source datasets from CRC and type 2 diabetes (T2D) studies, identifying taxa and genes consistent with existing literature while uncovering novel candidates for further biological investigation.

The remaining content of the paper is organized as follows. Section~\ref{s:model} presents the proposed model, including the Methods and Algorithm for ZIPG-SK. Section~\ref{s:theory} establishes the theoretical foundations, detailing the extension of FDR control for count data variable selection from a single-source to a rigorous multi-source framework. The main results from simulations and real data analysis are demonstrated in Section~\ref{s:simulations} and Section~\ref{s:real_data}, respectively. Section~\ref{s:conclusion} concludes the paper. A detailed description of the ZIPG-SK method, all proofs of the theoretical results, and additional experimental analyses of simulations and real-world data is provided in the Supplementary Materials.

\vspace{-12pt}
\section{Methods} \label{s:model}
\vspace{-6pt}

This section provides the construction of ZIPG-SK method for variable selection with multi-source count data. First, we introduce the foundation of knockoff-based methods in the context of multi-source multiple testing problems. Next, we describe the construction of knockoff variables for count data from individual sources. We then present complete ZIPG-SK procedure, which integrates information across all sources. Finally, we detail the aggregation algorithm for ZIPG-SK, which stabilizes the final selection results using e-values.

\vspace{-10pt}
\subsection{The multiple testing for multi-source data} 
\vspace{-6pt}

The ZIPG-SK method provides a general framework for testing the union null hypotheses on conditional associations between outcomes and a set of candidate count features. We assume data are collected from $K$ independent experiments (sources), indexed by the set $[K]=\{1,\dots,K\}$. Within the $k$-th experiment, we observe a sample of $n_k$ individuals, where each observation $(Y_{k}^{i},{W}_{k}^{i})$ is an independent and identically distributed (i.i.d.) realization from a distribution $\mathcal{D}_{k}$, i.e., $(Y_{k}^{i},{W}_{k}^{i}) \stackrel{i.i.d.}{\sim } \mathcal{D}_{k}$, for $i \in [n_k]$. Here, $Y_{k}^{i}$ denotes the outcome, and ${W}_{k}^{i}$ is a $p$-dimensional vector of count features for the $i$-th individual in experiment $k$. 

\vspace{-3pt}
For each feature $j \in [p]$ and experiment $k\in [K]$, we define the null hypothesis $H_{0,kj}$ that the $j$-th feature is conditionally independent of the outcome in the $k$-th experiment, given all other features \citep{dai2023false}. Specifically,
\vspace{-6pt}
$$ \vspace{-6pt} H_{0,kj}: {W}_{kj}\bot {Y}_k |{W}_{k,-j},\vspace{-6pt} $$ 
where ${Y}_k$ is the outcome, ${W}_{kj}$ is the $j$-th feature random variable from experiment $k$, and ${W}_{k,-j} := \{{W}_{k1},\dots, {W}_{kp}\} \setminus {W}_{kj}$ denotes the vector of all feature random variables except the $j$-th. Denote $\mathcal{H}_{0,k} =\{j \in [p] : H_{0,kj} \text{ is true}\}$. Our primary interest is to test the union null hypotheses for each feature across all experiments, defined as
\vspace{-6pt}
\begin{equation}
    H_{0j}  = \cup_{k=1}^K H_{0,kj} ,\; \text{for}\; j \in [p]. 
    \vspace{-3pt}
\end{equation}
We denote the set of false null hypotheses as $\mathcal{S} = \{ j \in [p]: H_{0j} \text{ is false}\}$ and the set of true null hypotheses as $\mathcal{H}_0 = \mathcal{S}^c = \cup_{k=1}^K \mathcal{H}_{0,k} =\{j \in [p]: H_{0,j} \text{ is true}\}$. The aim is to develop a selection procedure that returns a selection set $\widehat{\mathcal{S}} \in [p]$ with a controlled FDR, which is the expected false discovery proportion (FDP):
\vspace{-6pt}
\begin{equation}
    \begin{aligned}
        {FDR}(\widehat{\mathcal{S}})=\mathbb{E}[{FDP}(\widehat{\mathcal{S}})]=\mathbb{E}\left(\frac{|\widehat{\mathcal{S}}\cap\mathcal{H}_{0}|}{|\widehat{\mathcal{S}}|\vee1}\right).
    \end{aligned}
\end{equation}

\vspace{-6pt}
\subsection{Knockoff construction for individual-source count data} \label{Synthetic} 
\vspace{-6pt}

The construction of knockoff variables for individual-source count data is akin to synthetic null generation, and it begins with distributional assumptions for each count feature. While models such as the Zero-inflated Poisson (ZIP) and Zero-inflated negative binomial (ZINB) are commonly used for single-cell and gut microbiome data \citep{chen2022distribution}, they often fail to account for the influence of other covariates (e.g., clinical factors or dietary patterns). Moreover, these models may not adequately capture the inherent variability and heterogeneity in count data, where the relationship between mean abundance and its dispersion can vary significantly across different features.

\vspace{-3pt}
To address these limitations, we assume each count feature follows a Zero-inflated Poisson-Gamma (ZIPG) distribution, which is a flexible framework that models both mean abundance and its dispersion through covariates \citep{jiang2023flexible}. This distributional assumption, along with the incorporation of relevant covariates, allows for a more accurate representation of the data. The count data can then be effectively approximated by a multivariate distribution using a Gaussian copula.
Next, we detail the construction of knockoff variables for single-source count data. Without loss of generality, we omit the experiment subscript $k$ in the following discussions.

1. The null model: ZIPG specified by the Gaussian copula
\vspace{-6pt}

The null model assumes that the random vector of features for each individual, once transformed, follows a $p$-dimensional Gaussian distribution. The intuition is that all individuals are drawn from a homogeneous population, where each feature's marginal count distribution is a ZIPG distribution, and the correlation structure among features is captured by a Gaussian copula.
Under the null model, each count value $W_{j}^i$, representing the count of feature $j$ for individual $i$, is assumed to independently follow the hierarchical ZIPG distribution, i.e.,
$W_j^i \sim ZIPG(\mathrm{W};\lambda_j^i,\theta_j^i,\pi_j),\;j \in [p]$. Specifically,
\begin{equation}
    \begin{aligned}
        W_{j}^i\mid U_{j}^i
        &\sim 
         \begin{cases} 0 & \text{with probability } \pi_j \\ 
         \text{Poisson}(\lambda_{j}^i U_{j}^i) & \text{with probability } 1-\pi_j,
        \end{cases} \\  
        U_{j}^i  & \sim \text{Gamma} \left( (\theta_{j}^i)^{-1}, \theta_{j}^i \right),
    \end{aligned}
    \vspace{-6pt}
\end{equation} 
where $\lambda_{j}^i$ represents the true abundance level of count feature $j$ for individual $i$. The parameter $\pi_j$ denotes the zero-inflation probability, representing the probability that feature $j$ is truly zero. Additionally, $U_{j}^i$ follows a Gamma distribution with shape parameter $(\theta_{j}^i)^{-1}$ and scale parameter $\theta_{j}^i$, ensuring a unit mean and a variance of $\theta_{j}^i$ for the latent random effect. The parameters $\lambda_{j}^i$ and $\theta_{j}^i$ are referred to as the abundance mean parameter and the abundance dispersion parameter, respectively.

\vspace{-3pt}
The random vector of features for an individual, $W^i = (W_{1}^i,\dots,W_{p}^i)^{\top} $, is assumed to follow a multivariate ZIPG distribution specified by a Gaussian copula. This assumption implies that after a suitable transformation, the resulting variables are sample drawn from a $p$-dimensional Gaussian distribution with mean vector $0$ and covariance matrix $\mathbf{R}$. Specifically, \vspace{-32pt}
 \begin{equation}
    \left( \Phi^{-1}(F_{1}^i(W_{1}^i)),\dots,\Phi^{-1}(F_{p}^i(W_{p}^i))\right)^{\top} \:\stackrel{i.i.d.}{\sim }\:N_{p}\left(\mathbf{0},\mathbf{R}\right), 
    \vspace{-6pt}
\end{equation}  
where $F_j^i$ and $\Phi$ denote the cumulative distribution function (CDF) of $ZIPG(\lambda_j^i, \theta_j^i, \pi_j)$ and the standard Gaussian distribution $N(0,1)$, respectively.

\vspace{-3pt}
Crucially, the validity of the knockoff framework does not require the null model to perfectly characterize the true joint distribution under the alternative hypothesis. Instead, the null model is employed as a constructive tool to generate knockoff variables that satisfy two essential conditions: (i) exchangeability with the original variables under the null hypothesis, and (ii) conditional independence from the outcome given the original variables. By modeling the marginal ZIPG distribution of each feature, our method produces knockoff variables that preserve the zero-inflated and skewed characteristics of the observed data, thereby ensuring comparability and rigorous control of the FDR.

2. Fitting the null model to real data
\vspace{-6pt}

The null model is specified by the feature parameters $\mathbf{\Omega}=\{\lambda_j^i, \theta_j^i,\pi_j\}_{j=1}^p$ and the covariance matrix $\mathbf{R}$. To account for systematic variation and increase the precision of parameter estimates, we explicitly incorporate covariate information into the model.
Let $\mathbf{X}$ denote the covariates, e.g., clinical and dietary factors in gut microbiome studies.
By linking the ZIPG mean and dispersion parameters to the covariates via suitable functions, we enhance the quality of the generated knockoffs, which in turn contributes to higher statistical power for detecting true signals. Following \cite{jiang2023flexible}, these links are defined as follows:
\vspace{-6pt}
\begin{equation}
    \begin{aligned}
        g(\lambda_{j}^i)&=\beta_{j,0}+({X}^{i})^{\top}\bm{\beta}_{j}+\log(M^{i}),  \\
        g^{*}(\theta_{j}^i)&=\beta_{j,0}^{*}+({X}^{i})^{\top}\bm{\beta}_{j}^{*},
    \end{aligned}
    \vspace{-6pt}
\end{equation}  
where ${X}^i \in\mathbb{R}^{d}$ is the vector of covariates for the individual $i$, and $\log(M^{i})$ accounts for variations in sequencing depth with $M^{i} = \sum_{j=1}^{p} W_{ij}$ being the total read count (library size) for the $i$-th observation. In this paper, we adopt a logarithmic link function, $g(\cdot )=g^*(\cdot )=\log(\cdot )$, to ensure parameter positivity and integrate the covariate effects in a linear framework for estimation.
Notably, the model remains valid even in the absence of covariates ($\mathbf{X}^i=\mathbf{0}$), in which case it reduces to the non-covariate variant of the ZIPG-SK model (ZIPG-SK-nonx).
The parameters $\mathbf{\Omega}=\{\lambda_j^i, \theta_j^i,\pi_j\}_{j=1}^p$ are estimated via the Expectation-Maximization (EM) algorithm. The covariance matrix $\mathbf{R}$ is subsequently obtained from the sample covariance matrix of the latent gaussian variables. A more detailed description of the procedure is provided in Supplementary Materials S1.1.

3. Sampling from the fitted null model
\vspace{-6pt}

The generation of synthetic null data is a two-step process. We first independently sample $n$ gaussian vectors from the estimated $N_p(\mathbf{0}, \hat{\mathbf{R}})$, denoted as $(\widetilde{V}_{1}^i,\dots,\widetilde{V}_{p}^i)$ for $i \in [n]$. These vectors are then converted to ZIPG count vectors by inverting the transformation process: 
\vspace{-6pt}
\begin{equation}
    \begin{aligned}
        \widetilde{{W}}^{i}
        &:= \left( (\widehat{F}_{1}^{i})^{-1}(\Phi(\widetilde{V}_{1}^i)), \dots, (\widehat{F}_{p}^i)^{-1}(\Phi(\widetilde{V}_{p}^i)) \right)^{\top},\:i \in [n], 
    \end{aligned}
    \vspace{-6pt}
\end{equation} 
where $\widehat{F}_j^i$ is the estimated CDF of the marginal ZIPG distribution for feature $j$. The resulting matrix $\widetilde{\mathbf{W}} = (\widetilde{{W}}^1, \dots, \widetilde{{W}}^n)^{\top}$ represents the synthetic null data. It is important to note that the rows of $\widetilde{\mathbf{W}}$ are independently sampled from the fitted null model and do not have a one-to-one correspondence with the rows of the real data matrix $\mathbf{W}$.

\vspace{-6pt}
\subsection{ZIPG-SK for multi-source count data}\label{multi-source} 
\vspace{-6pt}
For each experiment $k \in [K]$, we construct a vector of test statistics, ${Z}_k \in \mathbb{R}^p$, for the original features matrix $\mathbf{W}_k$, and a corresponding vector, $\widetilde{{Z}}_k \in \mathbb{R}^p$, for their knockoff copies $\widetilde{\mathbf{W}}_k$. These statistics are constructed to satisfy the swap property: if a feature ${W}_{k,j}$ is swapped with its knockoff $\widetilde{{W}}_{k,j}$, then the statistics $Z_{k,j}$ and $\widetilde{Z}_{k,j}$ are also swapped.
We consider two primary approaches for constructing these statistics:
\vspace{-6pt}
\begin{itemize}
    \item Model-X Knockoffs (MX): 
    This approach directly leverages a model to predict the outcome $Y_k$ from the data $(\mathbf{W}_k,\widetilde{\mathbf{W}}_k)$. Examples include absolute coefficients from penalized generalized linear regression or feature importance from random forests.
    \vspace{-3pt}
    \item Cluster-Based Differential Expression Tests (ClusterDE): Statistics are obtained using a cluster-informed DE framework \citep{song2023clusterde}.
\end{itemize}
\vspace{-6pt}

For each experiment $k \in [K]$, we first compute per-experiment filter statistics ${C}_k = (C_{k1}, \dots, C_{kp})^{\top}$ from the original and knockoff test statistics $[{Z}_k, \widetilde{{Z}}_k]$ using a flip sign function $f_{FS}(\cdot)$, e.g., the difference function $C_{kj} = f_{FS}(Z_{kj},\widetilde{Z}_{kj}) = Z_{kj} - \widetilde{Z}_{kj}, \;j \in [p]$. To combine the evidence across all $K$ experiments, we synthesize the per-experiment statistics into a single vector ${C} \in \mathbb{R}^p$ using a One Swap Flip Sign Function (OSFF): 
\vspace{-6pt}
\begin{equation}\label{C1}
    C=f_{OSFF} ([{Z}_1,\widetilde{{Z}}_1],\dots,[{Z}_K,\widetilde{{Z}}_K]).
    \vspace{-6pt}
\end{equation} 
The OSFF is guaranteed by construction to satisfy the required sign-flipping property \citep{dai2023false}. A commonly used OSFF is the Direct Difference Hadamard Product, which computes the element-wise product of the difference vectors from each experiment: ${C} = \odot_{k=1}^K({Z}_k-\widetilde{{Z}}_k)$, where $\odot$ denotes the Hadamard product. More detailed information on the OSFF is provided in Supplementary Materials S1.2.

\vspace{-6pt}
Using the filter statistics ${C}$, the knockoff/knockoff+ filter is applied to obtain the selection set $\widehat{S} \text{ or }\widehat{S}_+$ under the SK/SK+ procedure. 
With the knockoff filter, the ZIPG-SK algorithm provides the selection set $\widehat{S} = \{ j: C_{j}\geq \tau\}$, where 
\vspace{-6pt}
\begin{equation}
    \tau= \min\left \{ t\in {\mathcal{C}}_{+ }: \frac {\#\{ j: C_{j}\leq - t\} }{\#\{ j: C_{i}\geq t\} \vee 1}\leq q\right \} .
    \label{S1}
    \vspace{-3pt}
\end{equation} 
With a more conservative knockoff+ filter, the ZIPG-SK algorithm provides the selection set $\widehat{S} _{+ }= \{ j: C_{j}\geq \tau_{+ }\}$, where  
\vspace{-6pt}
\begin{equation}
    \tau_{+ }= \min \left \{ t\in {\mathcal{C}}_{+ }: \frac {1+ \#\{ j: C_{j}\leq - t\} }{\#\{ j: C_{j}\geq t\} \vee 1}\leq q\right \}.
    \label{S2}
    \vspace{-3pt}
\end{equation} 
Here, $q$ is the target FDR level and $\mathcal{C}_+=\{|C_j|:|C_j|> 0\}$.
A detailed description of the ZIPG-SK algorithm is provided in Algorithm~\ref{alg:01}.
\vspace{-6pt}
\begin{algorithm}[h] 
    \caption{The Algorithm for ZIPG-SK.}
    \label{alg:01}
    \begin{algorithmic}[1]
        \REQUIRE Count matrices $\bm{W}_k \in \mathbb{R}^{n_k \times p}$, labels $\bm{Y}_k \in \mathbb{R}^{n_k}$, covariates $\bm{X}_k \in \mathbb{R}^{n_k \times d}$ for $k \in [K]$.
        \ENSURE Synthetic null data ${\widetilde{\bm{W}}_k},\;k \in [K]$; The filter statistics ${C}$; The selection set $\widehat{S} \text{ or }\widehat{S}_+$. 
        \vspace{-30pt}
        \FOR{each $k \text{ in } \{1, 2, \dots, K\}$}
        \STATE Synthetic null generation: $ \widetilde{\bm{W}}_k $ based on $\bm{W}_k,\; \bm{X}_k$ (see Section~\ref{Synthetic});
        \STATE Test statistics: ${Z}_k,{\widetilde{Z}}_k$ are calculated by Model-X knockoffs or ClusterDE (see Section~\ref{multi-source});
        \ENDFOR
        \STATE Filter statistics: select the appropriate OSFF function to calculate ${C}=f_{OSFF}([{Z}_1,\widetilde{{Z}}_1],\dots,[{Z}_K,\widetilde{{Z}}_K])$ (see equation~(\ref{C1})).
        \STATE Threshold calculation and feature selection: use the filter statistics $\mathbf{C}$ and apply the knockoff / knockoff+ filter to obtain $\widehat{S} \text{ or }\widehat{S}_+$ (see equation~(\ref{S1}) and ~(\ref{S2}));\\
        \RETURN $\bm{\widetilde{W}_k},\;k \in [K]$; ${C}$; $\widehat{S} \text{ or }\widehat{S}_+$.
    \end{algorithmic}
\end{algorithm}
 
\vspace{-12pt}
\subsection{Aggregation Algorithm for ZIPG-SK}  
\vspace{-6pt}

The knockoffs-based framework, while guaranteeing FDR control for variable selection, is susceptible to selection instability due to the inherent randomness of the feature generation process. To mitigate this stochastic instability and enhance the reproducibility of the ZIPG-SK method while retaining statistical power, we develop a derandomized aggregation strategy for test/fliter statistics. 
Specifically, inspired by the work of \citet{ren2024derandomised}, we introduce an e-value–based approach for aggregation.
An e-value, E, is a non-negative random variable satisfying $\mathbb{E}[E] \le 1$ under the null hypothesis, a property that allows for robust aggregation across independent tests by taking their average \citep{vovk2021values}. This enables the construction of a more robust and derandomized version of ZIPG-SK.

\begin{algorithm}[!htbp ]  
    \caption{Aggregating Algorithm for ZIPG-SK (Agg-ZIPG-SK).}
    \label{alg:02}
    \begin{algorithmic}[1]
        \setlength{\abovedisplayskip}{3pt}
        \setlength{\belowdisplayskip}{3pt}
        \REQUIRE Count matrices $\bm{W}_k \in \mathbb{R}^{n_k \times p}$, labels $\bm{Y}_k \in \mathbb{R}^{n_k}$, covariates $\bm{X}_k \in \mathbb{R}^{n_k \times d}$ for $k \in [K]$; $B \in \mathbb{N}^+$ copies; $\alpha \in (0,1)$.
        \ENSURE The Aggregating statistics ${E}^{Z}$ or ${E}^C$; The selection set $\hat{S}_{\text{Agg}}$.
        \FOR{each $b \text{ in } \{1, 2, \dots, B\}$}
        \FOR{each $k \text{ in } \{1, 2, \dots, K\}$}
        \STATE Synthetic null generation: $\bm{\widetilde{W}}^{(b)}_k$ based on $\bm{W}_k,\; \bm{X}_k$;
        \STATE 
        Test statistics: ${Z}^{(b)}_k,{\widetilde{Z}}^{(b)}_k$ are calculated by Model-X knockoffs or ClusterDE;
        \ENDFOR 
        \ENDFOR 
         \STATE \textbf{Aggregation (e-value based)}: (\textit{Underline denotes the aggregated output of $B$ copies.}) 
         \vspace{-30pt}
        \IF {\{Aggregation of \textbf{test statistics}\}} 
            \STATE $\underline{{E}^{Z}_k}$ obtained by eqution~(\ref{Evalues})-~(\ref{EZ}), 
             \STATE ${E}^{Z} = \frac{1}{K}\sum_{k=1}^K {E}^{Z}_k$; 
        \ELSE[{Aggregation of \textbf{filter statistics}}]
            \STATE ${C}^{(b)}=f_{OSFF}([{Z}^{(b)}_1,\widetilde{{Z}}^{(b)}_1],\dots,[{Z}^{(b)}_K,\widetilde{{Z}}^{(b)}_K])$,
             \STATE $\underline{{E}^C}$ obtained by eqution~(\ref{Evalues2})-~(\ref{EZ2});
        \ENDIF
         \STATE Apply the e-BH procedure (equation~(\ref{ebh})) to the aggregated $e$-values ${E}^{Z}$ or ${E}^C$ at level $\alpha$ to obtain the selection set $\widehat{S}_{\text{Agg}}$.\\
        \RETURN ${E}^{Z}$ or ${E}^C$; $\hat{S}_{\text{Agg}}$ .
    \end{algorithmic}  
\end{algorithm}
 \vspace{-3pt}

\vspace{-3pt}
The aggregation procedure, referred to as Agg-ZIPG-SK (Algorithm~\ref{alg:02}), stabilizes the final selection results by utilizing e-values. Specifically, we first generate $B$ knockoff copies 
$ \{ \widetilde{\bm{W}}_k^{(b)} \}_{b=1}^B $ for $k \in [K]$. The knockoff test statistics, $\{{Z}^{(b)}_k, {\widetilde{Z}}^{(b)}_k \}_{b=1}^B$, or the filter statistics $\{{C}^{(b)}\}_{b=1}^B$ is then calculated based on $ \{ \bm{W}_k, \widetilde{\bm{W}}_k^{(b)} \}_{b=1}^B $. 
The Agg-ZIPG-SK procedure aggregates the statistics of choice—either the $B$ replicates of knockoff test statistics, $\{{Z}^{(b)}_k, {\widetilde{Z}}^{(b)}_k \}_{b=1}^B$, or alternatively, the filter statistics $\left\{{C}^{(b)}\right\}_{b=1}^B$ directly. We now detail the implementation and specific mechanics of these two aggregation schemes.

\vspace{-3pt}
Under the first aggregation scheme, the Agg-ZIPG-SK procedure aggregates the $B$ knockoff test statistics $\{{Z}^{(b)}_k,{\widetilde{Z}}^{(b)}_k\}_{b=1}^B $ from the $k$-th source. 
Let $C_{k,j}^{(b)} = f_{FS}(Z_{k,j}^{(b)},\widetilde{Z}_{k,j}^{(b)})$, where $Z_{k,j}^{(b)}$ and $\widetilde{Z}_{k,j}^{(b)}$ are the $j$-components of ${Z}^{(b)}_k$ and ${\widetilde{Z}}^{(b)}_k$, respectively. 
Following \citet{ren2024derandomised}, for each feature $j \in [p]$ and each replicate $b \in [B]$, the e-value $e_{k,j}^{(b)}$ is defined as:
\vspace{-3pt}
\begin{equation}
    \begin{aligned} 
        e^{(b)}_{k,j} =  \frac{p \cdot \mathbf{1}\{ C^{(b)}_{k,j} \geq T_{C,k}^{(b)} \}}{1 + \sum_{l \in [p]} \mathbf{1}\{ C^{(b)}_{k,l} \leq - T_{C,k}^{(b)} \}},
    \end{aligned}
    \label{Evalues}
    \vspace{-3pt}
\end{equation}
where $T_{C,k}^{(b)}$ is the replicate-specific threshold, obtainable analogously to equation~\eqref{S2} with $C_j$ replaced by $C_{k,j}^{(b)}$.
This yields $B$ corresponding e-values, $\{{e}^{(b)}_{k,j}\}_{b=1}^B$. For each feature $j \in [p]$, we aggregate the e-values from these $B$ knockoff replicates by taking the average:
\vspace{-3pt}
\begin{equation}
    \begin{aligned} 
        {E}^{Z}_{k,j} = \frac{1}{B} \sum_{b=1}^{B} {e}^{(b)}_{k,j}.
    \end{aligned}
    \label{EZ}
    \vspace{-3pt}
\end{equation}
However, the knockoff-derived e-values $e^{(b)}_{k,j}$ in~\ref{Evalues} are relaxed e-values: they satisfy the global bound $\sum_{j\in \mathcal{H}_0}\mathbb{E}[e^{(b)}_{k,j}]\le p$ rather than $\mathbb{E}[e^{(b)}_{k,j}]\le 1$ for each null $j \in \mathcal{H}_0$. \citet{ren2024derandomised} show that averaging these $B$ relaxed e-values across knockoff replicates preserves the aggregated bound, $\sum_{j\in \mathcal{H}_0}\mathbb{E}[{E}^{Z}_{k,j}]\le p$, which is sufficient to guarantee exact finite-sample FDR control. 
Subsequently, the $K$ source-specific aggregated e-values $E^{Z}_{k,j}$ are combined by taking the mean to obtain the final $e$-value $E_{j}^{Z},\;j\in[p]$. 
The final selection set, $\widehat{S}_{\text{Agg}}$, is obtained by applying the e-BH procedure at level $\alpha$ to the aggregated $e$-values $E_{1}^{Z}, \ldots, E_{p}^{Z}$. The rejection set is determined by the rule:
\vspace{-3pt}
\begin{equation}
    \begin{aligned} 
        S_{\text{AGG}} = \left\{ j : E_j \geq \frac{p}{\alpha \widehat{l}} \right\}, 
    \end{aligned}
    \label{ebh}
    \vspace{-3pt}
\end{equation}
where $ \widehat{l} = \max \{ l \in [p] : E_{(l)} \geq \frac{p}{\alpha l}\} $, $E_{(1)} \geq \dots \geq E_{(p)}$ are the ordered $e$-values, and $\widehat{l}=0$ if the set is empty. This procedure guarantees $\text{FDR} \leq \alpha$ under arbitrary dependence among the $E_j$'s \citep{wang2022false}.
While the ClusterDE methods calculate the filter statistic based on $p$-values, an e-value can be calibrated into a valid $p$-value via the admissible $e_{value}$-to-$p_{value}$ transformation, $p_{value}=\min(1,1/E)$ \citep{vovk2021values}. Therefore, the filter statistic can also be computed by aggregating the corresponding $p$-values across the $K$ sources using a suitable aggregation technique. 

\vspace{-3pt}
In the second scheme, the aggregation relies on the filter statistics $\left\{{C}^{(b)}\right\}_{b=1}^B$, where the vector ${C}^{(b)} = (C^{(b)}_{1}, C^{(b)}_{2}, \dots, C^{(b)}_{p} )^{\top}$. Based on the filter statistics $\{{C}^{(b)}\}_{b=1}^B$, the e-value $e^{(b)}_j$ for the $j$-th feature in the $b$-th replicate is defined as:
\begin{equation}
    \begin{aligned} 
        e^{(b)}_j =  \frac{p \cdot \mathbf{1}\{ C^{(b)}_j \geq T_C^{(b)} \}}{1 + \sum_{l \in [p]} \mathbf{1}\{ C^{(b)}_l \leq - T_C^{(b)} \}},
    \end{aligned}
    \label{Evalues2}
    \vspace{-3pt}
\end{equation}
where $C_j^{(b)}$ is the knockoff filter statistic and $T_C^{(b)}$ is the replicate-specific threshold defined analogously to equation~\eqref{S2} with $C_j$ replaced by $C_{j}^{(b)}$. Consequently, the e-values also satisfy $\sum_{j\in \mathcal{H}_0}\mathbb{E}[e^{(b)}_{j}]\le p$, thereby maintaining valid FDR control. We then obtain the aggregated e-value ${E}^{C}_{j}$ by taking the average over the $B$ replicates:
\vspace{-6pt}
\begin{equation}
    \begin{aligned} 
        {E}^{C}_{j} = \frac{1}{B} \sum_{b=1}^{B} {e}^{(b)}_{j}.
    \end{aligned}
    \label{EZ2}
    \vspace{-3pt}
\end{equation} 
Likewise, the corresponding selection set $S_{\text{AGG}}$ is obtained from the final e-values ${E}^{C}$ using the e-BH procedure.

\vspace{-12pt}
\section{Theoretical results} \label{s:theory}
\vspace{-6pt}

The theoretical development parallels the model construction process introduced earlier, beginning with the individual-source setting and extending to the multi-source case. As noted by \citet{dai2023false}, selecting the intersection of variable sets across $K$ data sources via intersection formula fails to guarantee FDR control. To overcome this limitation, we build upon the established definition of the Model-X knockoffs framework introduced by \citet{candes2018panning} and utilize the SK method proposed by \citet{dai2023false}. This joint formulation enables rigorous FDR control in the challenging multi-source setting.

\vspace{-6pt}
\subsection{Theoretical results for the individual-source data} 
We begin by establishing the concept of null features under conditional independence. 
\vspace{-12pt}
\begin{definition}[Null features]
    A feature $W_j$ is defined as ``null'' if and only if response $Y$ is independent of $W_j$ conditional on other features, ${W}_{-j} = \{W_1, \ldots, W_p\} \setminus \{W_j\}$. The set of null features is denoted $\mathcal{H}_0 \subset [p]$, and a feature $W_j$ is called ``non-null'' or relevant if $j \notin \mathcal{H}_0$.
\end{definition}
\vspace{-6pt}
 
The knockoffs framework aims to identify the largest possible subset of conditional non-null variables while ensuring valid FDR control. In this work, the feature vector ${W}$ consists of count variables. For this data type, the knockoff copy $\tilde{W}_j$ is generated by sampling from the conditional distribution $P(W_j \mid {W}{-j})$, where by sampling we explicitly mean drawing a new realization using a random number generator. In particular, the construction is implemented through a Gaussian copula approach tailored for ZIPG features. This guarantees the fundamental conditional independence property:
\vspace{-6pt}
$$
\tilde{W}_j \mid ({W}_{-j}, Y) \stackrel{d}{=} W_j \mid {W}_{-j},
\vspace{-3pt}
$$ 
where $\overset{d}{=}$ denotes equality in distribution.
The knockoff construction for the individual source immediately satisfies the fundamental properties required by the Model-X framework \citep{candes2018panning}. The pairwise exchangeability between null count features and their knockoffs is thus immediate.
\vspace{-6pt}
\begin{lemma}[Pairwise exchangeability]\label{exchangeability}
    For any subset $S\subset\mathcal{H}_0$ of null indices,
    \vspace{-6pt}
    $$({W},\tilde{{W}}) \mid {Y} \overset{d}{=} ({W},\tilde{{W}})_{\mathrm{swap}(S)} \mid {Y}, $$
    where the vector $({W},\tilde{{W}})_{\mathrm{swap}(S)}$ is obtained from $({W},\tilde{{W}})$ by swapping the entries $W_j$ and $\tilde{{W}}_j$ for each $j \in S$.
\end{lemma}
\vspace{-6pt}

This property implies that the null features can be exchanged with their knockoffs without altering the joint distribution of $({W}, \tilde{{W}})$ given ${Y}$. As a direct consequence, the feature test statistics $Z_j$ satisfy the coin flipping property. This implies that under the null hypothesis, the probability of the statistic being positive or negative is equal. The proof of Lemma~\ref{exchangeability} is provided in Supplementary Material S2.1.

\vspace{-6pt}
\begin{lemma}[Coin flipping]
    \label{signs1}
    Let $C= f_{FS}(Z,\widetilde{Z})$. Conditional on the absolute values $(|C_{1}|,\ldots,|C_{p}|)$, the signs of the null $C_{j}$'s, $j\in\mathcal{H}_0$, are i.i.d. coin flips.
\end{lemma}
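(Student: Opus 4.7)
The plan is to derive the coin-flip property for the null signs from Lemma~\ref{exchangeability} combined with the anti-symmetry of the filter-statistic construction. First, I would fix any subset $S \subset \mathcal{H}_0$ and check how $C$ transforms under the swap. Every admissible filter-statistic recipe discussed in Section~\ref{S1.2} (the difference $Z_j - \widetilde{Z}_j$, the signed-max, or the DE contrast $S_j - \widetilde{S}_j$) satisfies an anti-symmetry identity: simultaneously exchanging the $j$-th coordinates of $\mathbf{Z}$ and $\widetilde{\mathbf{Z}}$ flips the sign of $C_j$ and leaves every $|C_i|$ unchanged. Writing $\epsilon_S \in \{\pm 1\}^p$ with $(\epsilon_S)_j = -1$ iff $j \in S$, this yields $C([\mathbf{Z},\widetilde{\mathbf{Z}}]_{\mathrm{swap}(S)}) = \epsilon_S \odot C([\mathbf{Z},\widetilde{\mathbf{Z}}])$.

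Next, I would transfer this invariance to the joint law of $(|C|,\mathrm{sign}(C))$. Each $Z_j$ and $\widetilde{Z}_j$ is a (possibly randomised) measurable function of $(\mathbf{W},\widetilde{\mathbf{W}},\bm{Y})$, and swapping the $j$-th columns of $\mathbf{W}$ and $\widetilde{\mathbf{W}}$ permutes $Z_j$ and $\widetilde{Z}_j$. Hence Lemma~\ref{exchangeability} gives $[\mathbf{Z},\widetilde{\mathbf{Z}}] \stackrel{d}{=} [\mathbf{Z},\widetilde{\mathbf{Z}}]_{\mathrm{swap}(S)}$, and combining this with the anti-symmetry identity above yields
\[
\bigl(|C|,\,\mathrm{sign}(C)\bigr) \;\stackrel{d}{=}\; \bigl(|C|,\,\epsilon_S \odot \mathrm{sign}(C)\bigr).
\]
Thus the joint distribution of the absolute values and the signs is invariant under any sign-flip supported on a null subset.

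To conclude, for any bounded measurable $h$ and any $S \subset \mathcal{H}_0$ the invariance above gives $\mathbb{E}[h(\mathrm{sign}(C_{\mathcal{H}_0})) \mid |C|] = \mathbb{E}[h(\epsilon_S \odot \mathrm{sign}(C_{\mathcal{H}_0})) \mid |C|]$ almost surely. Since the only probability measure on $\{\pm 1\}^{|\mathcal{H}_0|}$ invariant under arbitrary coordinate-wise sign flips is the uniform product measure, the null signs are i.i.d.\ $\mathrm{Uniform}\{-1,+1\}$ conditionally on $(|C_1|,\ldots,|C_p|)$. The main obstacle, and the step I would scrutinise most carefully, is verifying that the full test-statistic pipeline (glmnet, random forest, or clustering followed by a DE test) really commutes with the swap in each column; for constructions that use auxiliary randomness such as cross-validation folds or random tree splits, one must first condition on that randomness before invoking Lemma~\ref{exchangeability}, after which the argument goes through unchanged. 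A secondary technical nuisance is defining $\mathrm{sign}(0)$ when ties occur in discrete count data, which is handled in the usual way by setting $\mathrm{sign}(0)=0$ and noting that such indices are never selected by the knockoff filter.
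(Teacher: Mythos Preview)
Your proposal is correct and follows essentially the same approach as the paper: combine the anti-symmetry (flip-sign) property of the filter statistic with the pairwise exchangeability of Lemma~\ref{exchangeability} to deduce that null signs behave like fair coin flips. The only presentational difference is that the paper introduces a single random vector $\epsilon$ (i.i.d.\ Rademacher on $\mathcal{H}_0$, identically $1$ elsewhere), sets $S=\{j:\epsilon_j=-1\}$, and shows $\bm{C}\stackrel{d}{=}\epsilon\odot\bm{C}$ in one stroke, whereas you iterate over deterministic subsets $S\subset\mathcal{H}_0$ and appeal to the uniqueness of the sign-flip-invariant measure at the end; your added remarks on conditioning away auxiliary randomness and handling $\mathrm{sign}(0)$ are useful caveats that the paper leaves implicit.
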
   
\vspace{-6pt}

Since the knockoff variables constructed for the individual-source case satisfy both the pairwise exchangeability (Lemma \ref{exchangeability}) and coin flipping (Lemma \ref{signs1}) properties, the theoretical guarantees for FDR  control follow directly from \citet{candes2018panning}. The detailed proof of Lemma~\ref{signs1} is provided in Supplementary Material S2.1. The theoretical focus of this study, therefore, lies in extending these FDR guarantees to the more intricate problem of variable selection across multiple data sources.

\vspace{-6pt}
\subsection{Theoretical results for the multi-data source case}
\vspace{-6pt}

In the multi-source setting, the test statistics ${Z}_k$ and $\tilde{{Z}}_k$ are derived separately for each source $k$ from $( \mathbf{W}_k, \tilde{\mathbf{W}}_k )$. According to Subsection~\ref{multi-source} and Lemma~\ref{exchangeability}, these test statistics exhibit the swap property: swapping the feature and knockoff variables $({W}_k, \tilde{{W}}_k)$ results in the corresponding swap of $({Z}_k, \tilde{{Z}}_k)$. This property ensures that the multi-source test statistics $( [{Z}_1, \tilde{{Z}}_1], \ldots, [{Z}_K, \tilde{{Z}}_K])$ satisfy the joint pairwise exchangeability condition, conditional on the response ${Y}$.
The key to establishing FDR control in the multi-source setting is to show that the final filter statistics ${C}$ exhibits a specific joint sign-flipping symmetry, where ${C} = f_{OSFF}( [ {Z} _{1}, \tilde{{Z} } _{1}], \dots, [ {Z} _{K}, \tilde{{Z} } _{K}])$ given in~(\ref{C1}).
The following lemma generalizes the Coin Flipping Property (Lemma $\ref{signs1}$) to the multi-source setting. Let ${\epsilon} \in\{\pm1\}^{p}$ be a sign sequence, independent of $C$, where $\epsilon_{j}= +1$ for all $j\not\in \mathcal{H}_0$ and $\epsilon _{j}$ are independent random variables that are uniformly distributed on $\{\pm 1\}$ for all $j \in \mathcal{H}_0$.  

\vspace{-8pt}
\begin{lemma}[Joint sign-flipping symmetry] \label{signs}
Let ${C} = f_{OSFF}( [ {Z} _{1}, \tilde{{Z} } _{1}], \dots, [ {Z} _{K}, \tilde{{Z} } _{K}])$. Then
\vspace{-6pt}
$$(C_{1},...,C_{p})\stackrel{d}{=}(C_{1}\cdot\epsilon_{1},...,C_{p}\cdot\epsilon_{p}).$$
\end{lemma}  
\vspace{-12pt}
 
This joint symmetry property ensures that, for all null indices $j \in \mathcal{H}_0$, the selection probability of the original variable $W_j$ equals that of its knockoff counterpart $\tilde{W}_j$. This is essential for constructing selection procedures that control the FDR. The validity of the resulting selection procedure ZIPG-SK is formalized by the following theorem. 

\vspace{-6pt}
\begin{theorem}[FDR control]\label{FDR2} 
Suppose that the joint sign-flipping symmetry holds for the filter statistics $C$. For a given target $\text{FDR}$ level $q$, the selection procedures, resulting in $\widehat{S}$ (knockoff filter) and $\widehat{S}_+$ (knockoff+ filter), control the modified and standard FDR, respectively:
\vspace{-12pt}
\begin{align*}
    \mathrm{mFDR} = \mathbb{E}\left[\frac{|\widehat{S}\cap\mathcal{H}_{0}|} {|\widehat{S}|+1/q}\right] \leq q, \quad  \quad
    \mathrm{FDR} = \mathbb{E} \left [ \frac {| \widehat{S}_{+} \cap \mathcal{H}_{0} | }{| \widehat{S}_{+} | \vee 1}\right ] \leq q,
\end{align*} 
where $\mathcal{H}_{0}$ denotes the set of indices corresponding to variables under the joint null hypothesis.
\end{theorem}
\vspace{-6pt}

The $\mathrm{mFDR}$ is often used for computational convenience, but in high-dimensional settings, the difference between $\mathrm{mFDR}$ and the standard FDR is generally negligible. While the knockoff+ method provides stricter FDR control and is more conservative, the standard knockoff filter is more widely favored in practice due to its increased power and broad applicability. The  proof of Theorem~\ref{FDR2} is provided in Supplementary Material S2.2.

\vspace{-3pt}
The following theorem formalizes the FDR control guarantee provided by applying the e-BH procedure to the aggregated e-values.
\vspace{-12pt}
\begin{theorem}[FDR control of aggregated e-values]\label{FDR3} 
Let ${E} = ({E}_1, \ldots, {E}_p)^{\top}$ be the vector of the final aggregated e-values obtained in Algorithm~\ref{alg:02}. If $\sum_{j \in \mathcal{H}_{0}} \mathbb{E}[{E}_{j}] \leq p$, then applying the e-BH procedure at level $\alpha$ yields a selection set $\widehat{S}_{\text{Agg}}$ that controls the False Discovery Rate, that is, $\text{FDR}(\widehat{S}_{\text{Agg}}) \leq \alpha $.
\end{theorem}
\vspace{-6pt}


The proof of Theorem~\ref{FDR3} adapts e-value-based derandomization \citep{ren2024derandomised} and structurally extends it to prove the rigorous composability of the necessary global expectation bound under our multi-source aggregation scheme. The detailed proof of Theorem~\ref{FDR3} is provided in Supplementary Material S2.2.

\vspace{-20pt}
\section{Simulations}\label{s:simulations}
\vspace{-6pt}
\subsection{Simulation settings}
\vspace{-6pt}
The simulation study is designed to evaluate multi-source variable selection when the response variable ${Y}_k \in \{0, 1\}^{n_k}$ is binary across $K$ distinct datasets. For each dataset $k$, the count feature matrix is denoted by $\mathbf{W}_k \in \mathbb{R}^{n_k \times p}$. Features are partitioned into signal and null sets, $\mathbf{W}_k = [\mathbf{W}_{k,S} \,\, \mathbf{W}_{k,N}]$, where $\mathbf{W}_{k,S}$ contains features significantly associated with ${Y}_k$ and $\mathbf{W}_{k,N}$ contains irrelevant features. The response ${Y}_k$ is modeled as a binary outcome corresponding to two balanced groups, e.g., ${Y}_k = ({0}_{n_k/2}, {1}_{n_k/2})$. The generation process ensures that signal features exhibit distinct marginal distributions between the two response labels, while null features are drawn from the same distribution across all individuals. The sequencing depth for sample $i$ is defined as $M^i = \sum_{j=1}^p W_{ij}^k$. Additionally, an associated covariate matrix $\mathbf{X} \in \mathbb{R}^{n_k \times d}$ is generated from $\mathbf{X} \sim N_d (\mathbf{0},\mathbf{\Sigma})$, where $\mathbf{\Sigma} = (\rho_{ij})_{d \times d}$ has a weak correlation structure defined by $\rho_{ij} = 0.1^{|i-j|}$. Detailed parameters and the specific simulation procedure for count data generation are provided in Supplementary Material S3.1.

\vspace{-6pt}
The proposed multi-source method is benchmarked against two types of comparison procedures. First, we compare knockoff construction models: Model-X knockoff (using a simple gaussian baseline) and scDesign2 \citep{sun2021scdesign2}. scDesign2 is a transparent simulator based on the copula framework that generates count data while accurately capturing variable correlations (supporting marginal distributions such as Poisson, NB, and ZINB). We employ scDesign2's underlying model to construct knockoffs variables for our count data. Second, we compare two common strategies distinct from the simultaneous method: Pooling, where all $K$ datasets are merged for a single selection, and Intersection where variables are selected individually per dataset and the final discovery set is the intersection of results.

\vspace{-12pt}
\subsection{Simulation results}
\vspace{-6pt}
We first define the configurations used in the simulation. Three methods for knockoff construction are compared: Model-X knockoff (MX), a baseline using the Gaussian distribution; scDesign2 (ZINB), which relies on the ZINB distribution; and the proposed ZIPG model specified by the Gaussian copula (ZIPG). These are combined with three variable selection strategies: Intersection (Inter), Pooling (Pool), and the Simultaneous Knockoff (SK). We also use three test statistics—clusterDE (DE), glmnet (GLM), and random forest (RF)—to compute feature importance. Therefore, the ZIPG-SK method can be further specified as ZIPG-SK-DE, ZIPG-SK-GLM, and ZIPG-SK-RF, each representing the method calculated based on a specific statistic. We set the FDR control level to $q=0.2$ and evaluate performance on $K=2$ datasets, where the signal size is 10\% of the features.

\vspace{-12pt}
\subsubsection{FDR Control and Power} 
\vspace{-6pt}
Figure~\ref{fig_var} summarizes the results for FDR control and power over 50 experiments for different methods, where each dataset has a sample size of 400 and variable dimensions $p$ range from 200 to 1000. 
Figure~\ref{fig_var} (a) compares the FDR control and power across three knockoff construction models: ZIPG-SK, MX-SK, and ZINB-SK. We observe that the ZIPG-SK method exhibits the most robust performance across varying dimensions ($p$). Its FDR curve remains stable control, consistently hovering near target level of $q=0.2$ (Figure~\ref{fig_var} (a), lower plot). In comparison, MX-SK shows a conservative over-control of the FDR, and ZINB-SK exhibits slight FDR inflation in high-dimensional settings. The power further underscore the efficacy of the proposed method (Figure~\ref{fig_var} (a), upper plot). ZIPG-SK consistently achieves the highest power across all dimensions, indicating its superior ability to detect true signals while rigorously controlling the FDR.
Figure~\ref{fig_var} (b) compares the performance of the multi-source strategies—Inter, Pool, and SK—all implemented using the ZIPG model. The ZIPG-SK approach maintains stable FDR control and superior power relative to Inter and Pool. In contrast, the Inter and Pool strategies show greater variability in FDR control, with Pool frequently failing to maintain the nominal level of $q=0.2$ (Figure~\ref{fig_var} (b)). Regarding test statistics, the GLM-based method yields the lowest power (below 0.5), significantly trailing the DE and RF methods (over 0.6). This disparity likely stems from the GLM's reliance on linear assumptions, which may poorly capture the underlying non-linear relationships in the simulated count data. Further details are presented in Figure S.1.
Overall, simulation results validate that the ZIPG-SK framework provides the most robust combination of rigorous FDR control and high detection power for multi-source count data.

\vspace{-12pt}
\begin{figure}[!htbp] 
    \centering
 \begin{center}
        \subfloat[SK methods]{
        \begin{minipage}[c]{0.8\textwidth}
            \includegraphics[width=1\linewidth]{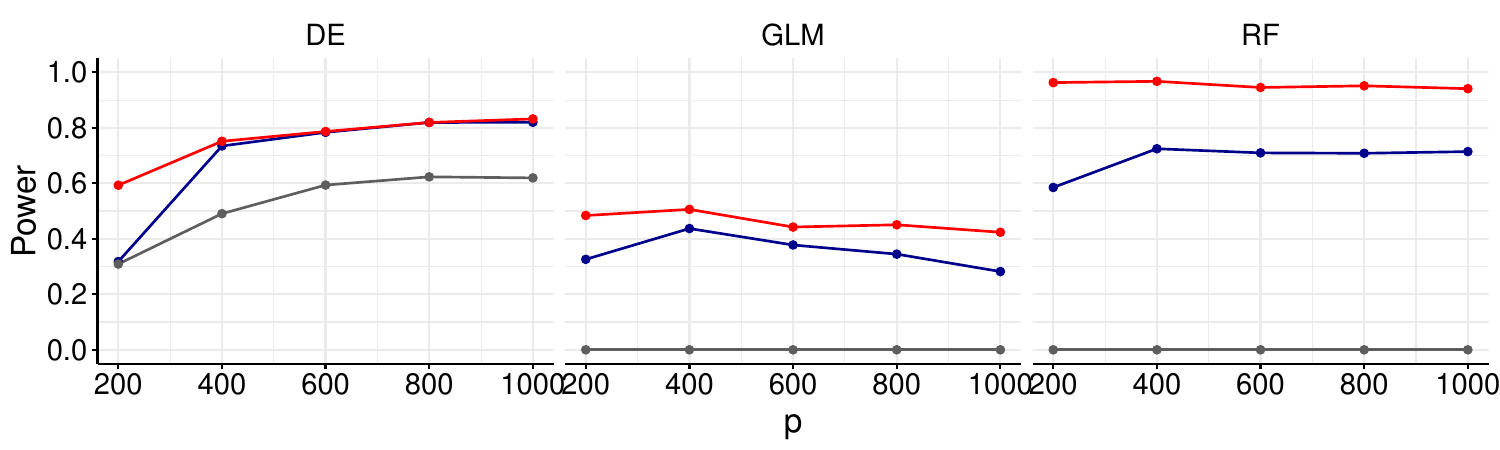} 
            \includegraphics[width=1\linewidth]{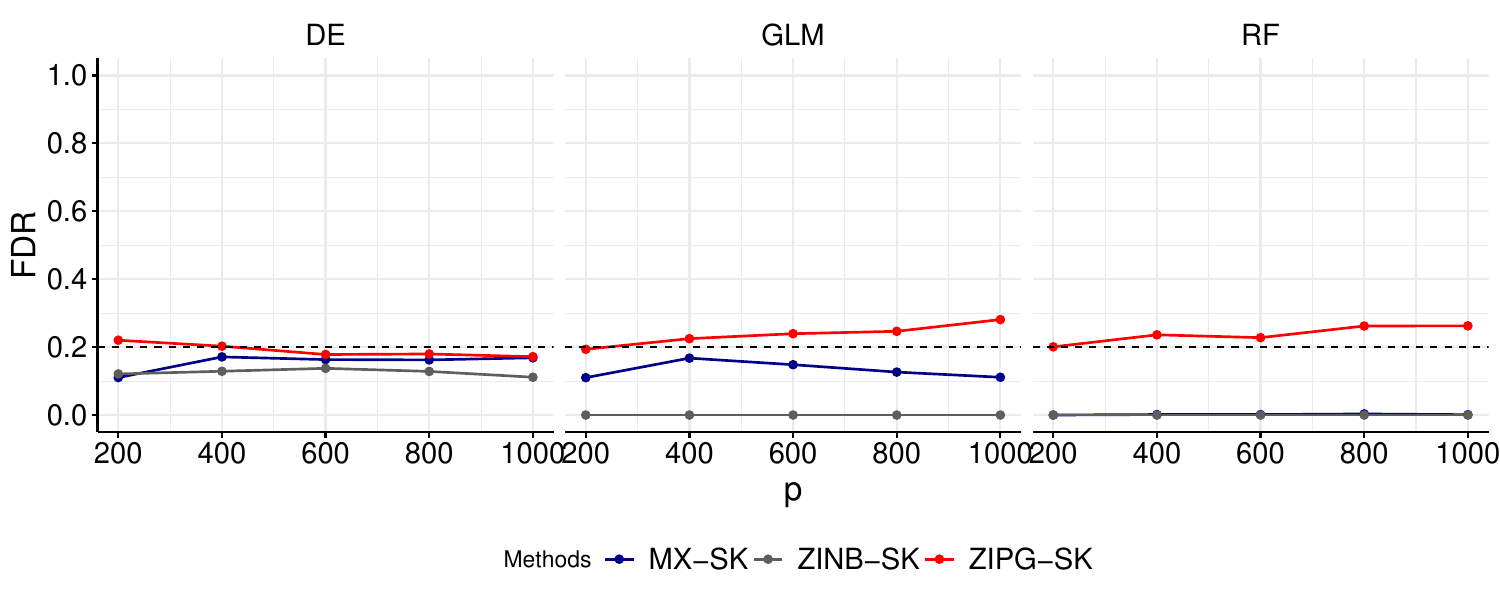}
        \end{minipage}
        }
    
        \subfloat[ZIPG methods]{
        \begin{minipage}[c]{0.8\textwidth}
            \includegraphics[width=1\linewidth]{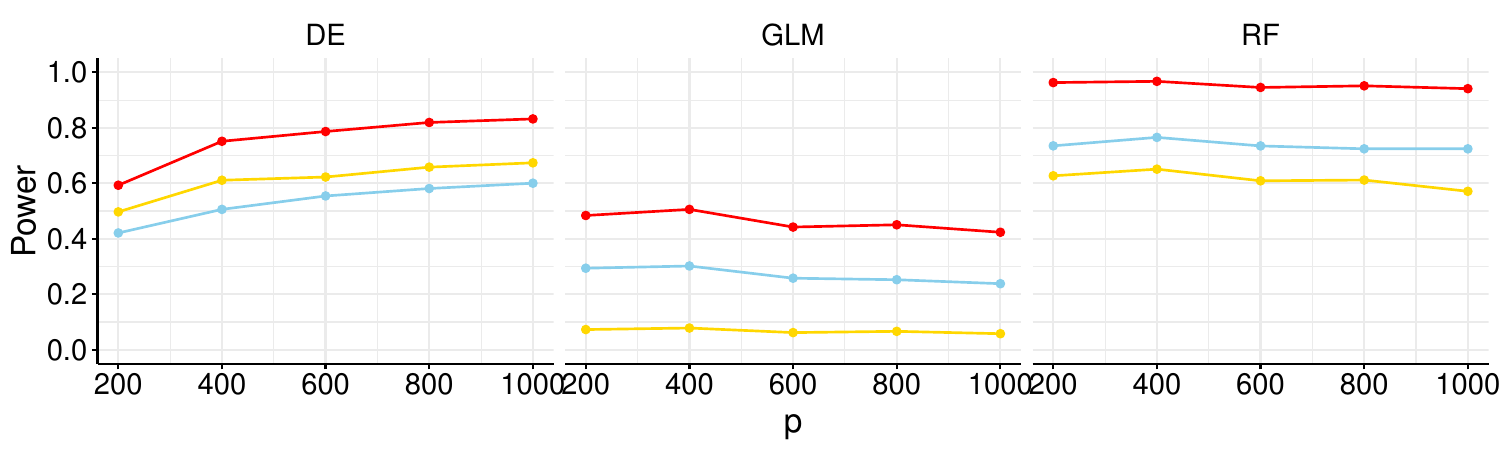} 
            \includegraphics[width=1\linewidth]{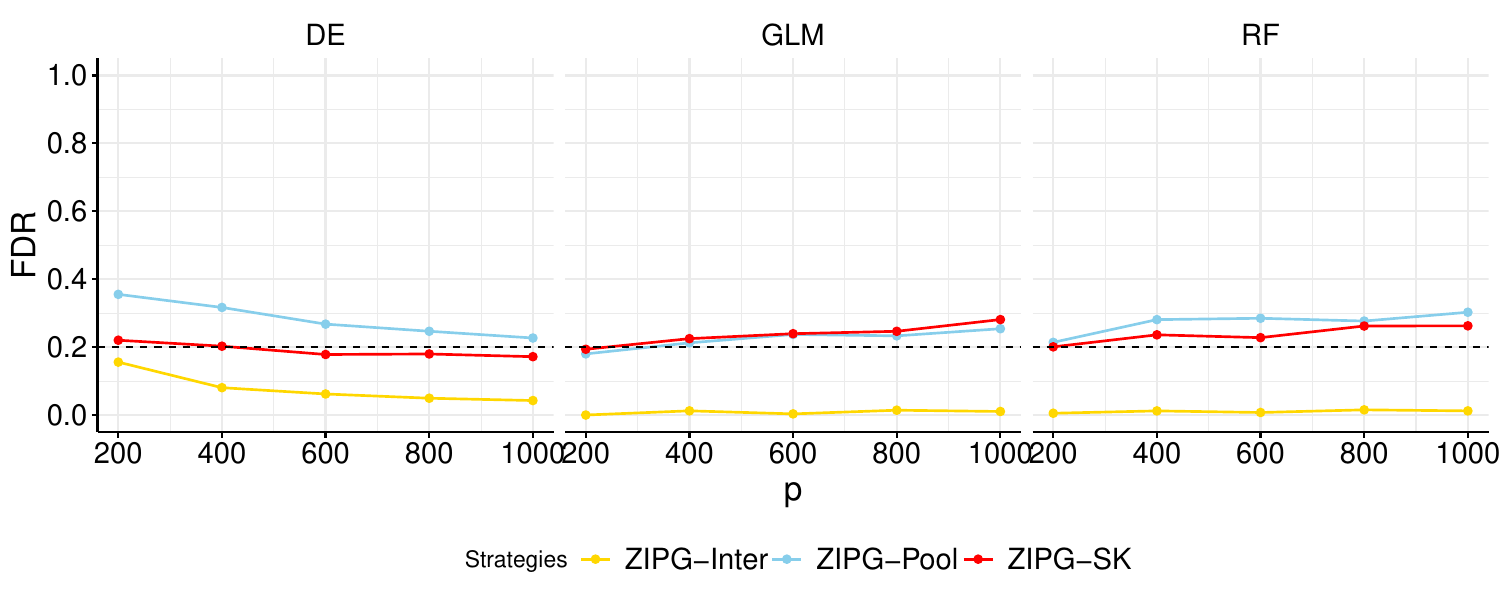}
        \end{minipage}
        }
\end{center}
    \caption{
    FDR Control and Power of Multi-Source Variable Selection Methods on Copula-based Count Data. Simulation Results on FDR Control and Power. The results are shown for $K=2$ datasets with FDR controlled at $q=0.2$. Three knockoff construction models are compared: the proposed ZIPG (using Gaussian copula), ZINB (using scDesign2), and MX (gaussian baseline). Three variable selection strategies are compared: Simultaneous Knockoff (SK), Intersection (Inter), and Pooling (Pool). Three test statistics are used: DE (clusterDE), GLM (glmnet), and RF (Random Forest). Panel (a) compares the FDR and Power of different knockoff construction methods (ZIPG, ZINB, MX) within the SK framework. Panel (b) compares the FDR and Power of various variable selection strategies (SK, Inter, Pool) based on the ZIPG method.  
    }
    \label{fig_var}
    \vspace{-6pt}
\end{figure}
\vspace{-6pt}

\vspace{-12pt}
\subsubsection{Aggregation Comparison and Robustness Analysis}
\vspace{-6pt}

We compare the ZIPG-SK method with its e-value aggregated version (Agg-ZIPG-SK) and its non-covariate variant (ZIPG-SK-nonx), the latter defined by setting $\mathbf{X} = \mathbf{0}$. 
Figure~\ref{fig_B} presents this comparative analysis of the structural variants. ZIPG-SK and ZIPG-SK-nonx methods produce similar results, likely because the ZIPG-SK model includes sequencing depth information, which may impact more than the covariates in data generation. Comparing simultaneous scenarios, Agg-ZIPG-SK achieves better FDR control than ZIPG-SK while maintaining comparable statistical power. This highlights the enhanced control achieved by the e-value-based aggregation approach without compromising detection power.

\vspace{-12pt}
\begin{figure}[!htbp]
    \centering
    \subfloat{\includegraphics[width=0.8\textwidth]{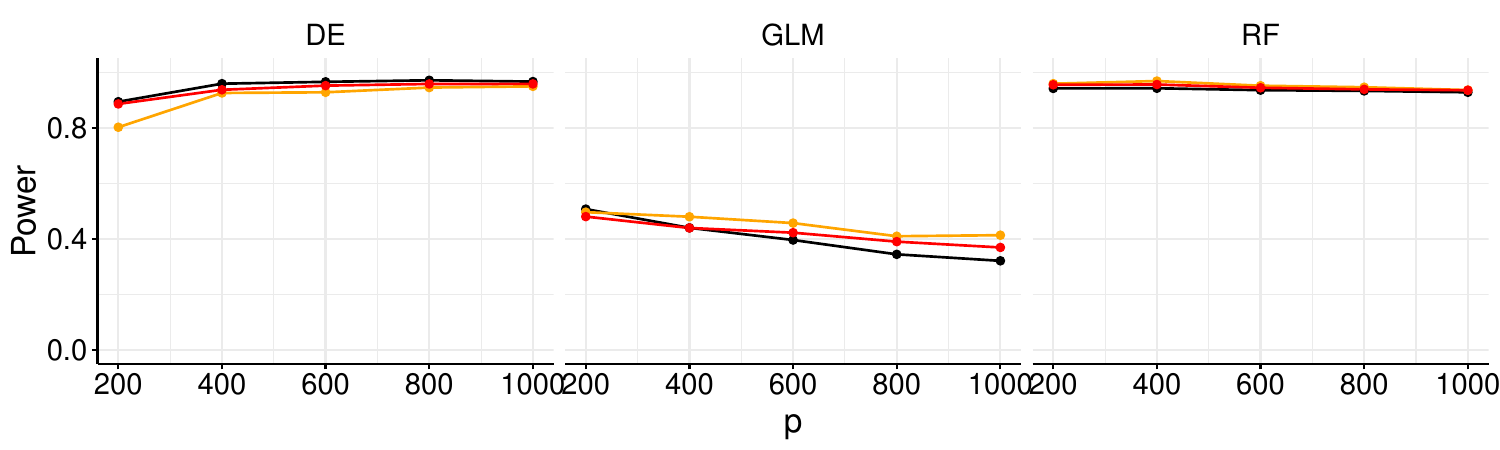}}
    
    \subfloat{\includegraphics[width=0.8\textwidth]{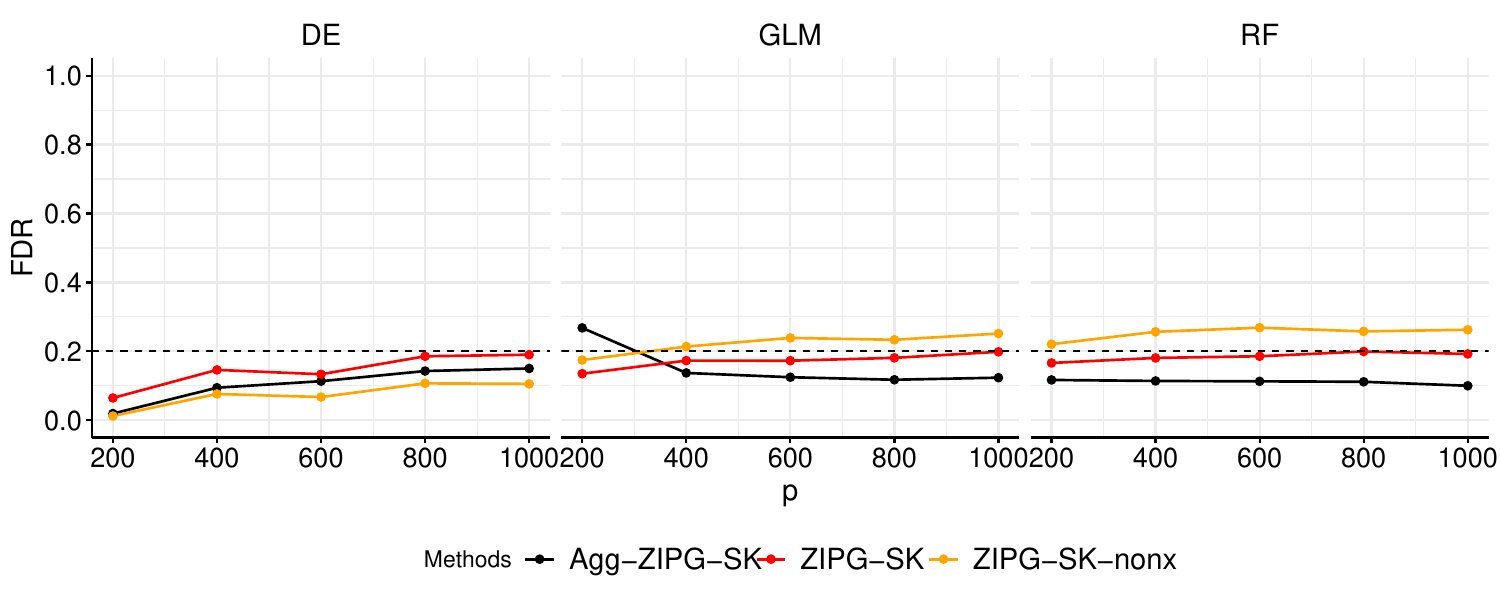}}
    \caption{
    FDR Control and Power Comparison of ZIPG-SK Variants. FDR and power results comparing the standard ZIPG-SK method with its variants: the Aggregated version (Agg-ZIPG-SK) and the non-covariate version (ZIPG-SK-nonx). The ZIPG-SK-nonx case corresponds to the special setting $\mathbf{X} = \mathbf{0}$ in the model. Results are shown across varying feature dimensions. 
    }
    \label{fig_B}
    \vspace{-6pt}
\end{figure}
\vspace{-6pt}

The study further investigates the robustness of the ZIPG-SK method against two common challenges in count data: high zero-inflation (Figure S.2) and inconsistent signal sets across sources (denoted as Diff, Figure S.3). The proposed ZIPG-SK method demonstrates stable FDR control across varying levels of zero-inflation, highlighting its robustness to the challenges posed by a high zero fraction. Although the FDR increases slightly with higher inter-source discrepancy (Diff) for all methods, ZIPG-SK consistently maintains optimal FDR control while achieving superior power across all zero-inflation and discrepancy scenarios. 

\vspace{-3pt}
The consistency of these findings is confirmed in additional settings including scenarios without a copula structure and cases with more data sources ($K=3$). Furthermore, Supplementary Materials (Section S3) provide additional details on the simulation settings including the selection of OSFF functions used in the algorithm.

\vspace{-24pt}
\section{Real data applications} \label{s:real_data}
\vspace{-6pt}


To demonstrate the effectiveness of our approach in multi-source data analysis, we focus on a complex count-based dataset——the gut microbiome dataset related to CRC. A detailed analysis for the T2D dataset, which yields consistent findings and supports the generalizability of our method, is presented in Supplementary Material S5 due to space limitations.

\vspace{-12pt}
\subsection{Datasets description}
\vspace{-6pt}

This dataset is the Gut Microbiome Data, comprising publicly available metagenomic data integrated from six distinct geographical cohorts, focusing on the association between gut microbiota and CRC \citep{liu2022multi}. The dataset includes 1014 individuals (505 CRC patients and 509 healthy controls, HC) and 845 gut microbiota species. Careful preprocessing and resulting geographical diversity make this a representative multi-source problem. 
Additional details on data preprocessing and feature-specific characteristics of the above dataset are provided in Supplementary Material S4.1.

\vspace{-12pt}
\subsection{Application to gut microbiome data}
\vspace{-6pt}

The association between gut microbiota and CRC is well-established, though significant inter-individual and geographical variations in microbial composition suggest that only a subset of microbes exhibit strong associations \citep{wong2023gut}. Our primary objective is to employ the ZIPG-SK method to identify core gut microbiota species associated with CRC across multiple geographical regions while maintaining FDR control.

\vspace{-3pt}
To ensure the stability and generalizability of the findings, we utilize a five-fold cross-validation (CV) scheme. In each iteration, four folds of data are used for variable selection, and the results from the five CV iterations are integrated to form one result set. This process is repeated 30 times to obtain comprehensive outcomes. In addition to comparing the ZIPG-SK method with the other knockoff methods (MX-SK and ZINB-SK) and selection strategies (Inter and Pool), we include two commonly used variable selection methods for compositional gut microbial data as external benchmarks. The clr-lasso \citep{susin2020variable} relies on the centered log-ratio transformation (clr) before applying LASSO regression. The coda-lasso \citep{lu2019generalized} employs a log-contrast model specifically designed for compositional data analysis (coda). These external benchmarks serve to evaluate the performance of our proposed approach against specialized compositional data methods.

\vspace{-3pt}
Figure~\ref{fig_real} illustrates the number and overlap of selected gut microbes using different methods and strategies. Figure~\ref{fig_real} (a) and (b) depict the outcomes obtained from the various methods and strategies compared in this study. It is observed that the Pool strategy generally selects the largest number of gut microbes, followed by the SK method, while the Inter strategy yields the fewest selections, sometimes resulting in zero. These results are reasonable, as the Pool strategy combines all data for variable selection, increasing statistical power, while the Inter strategy is inherently conservative. The DE approach tends to select a higher number of variables compared to the GLM and RF approaches, likely due to the nature of the Wilcoxon test-based statistics used in DE. The Inter and SK strategies select fewer gut microbes; for instance, the intersection set in the upper half of the upset plot (Figure~\ref{fig_real} (a)) contains only two variables. Figure~\ref{fig_real} (b) presents the outcomes under the SK-based strategy, showing that although the method generally selects only a few variables, there is significant overlap among the majority of the chosen gut microbes, confirming robust discovery.

\vspace{-12pt}
\begin{figure}[!htbp]
    \centering
    \subfloat[All methods]{
    \begin{minipage}[c]{0.49 \textwidth}
    \includegraphics[width=1\linewidth]{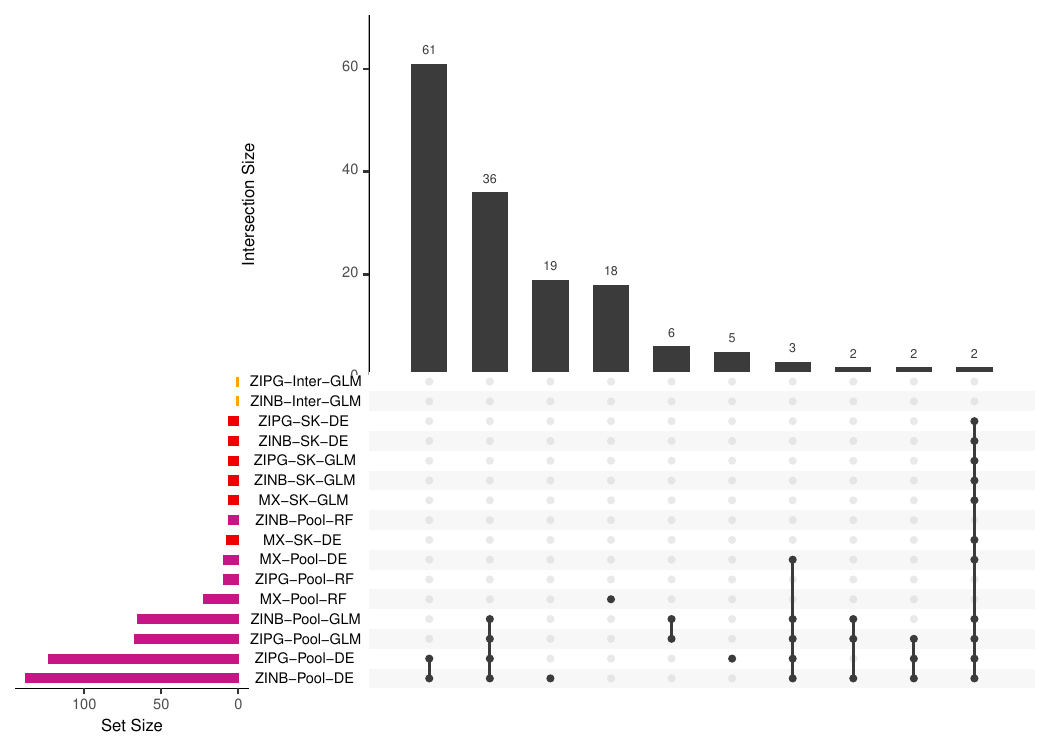}
    \end{minipage}}
    \subfloat[SK-based methods]{
    \begin{minipage}[c]{0.22\textwidth}
    \includegraphics[width=1\linewidth]{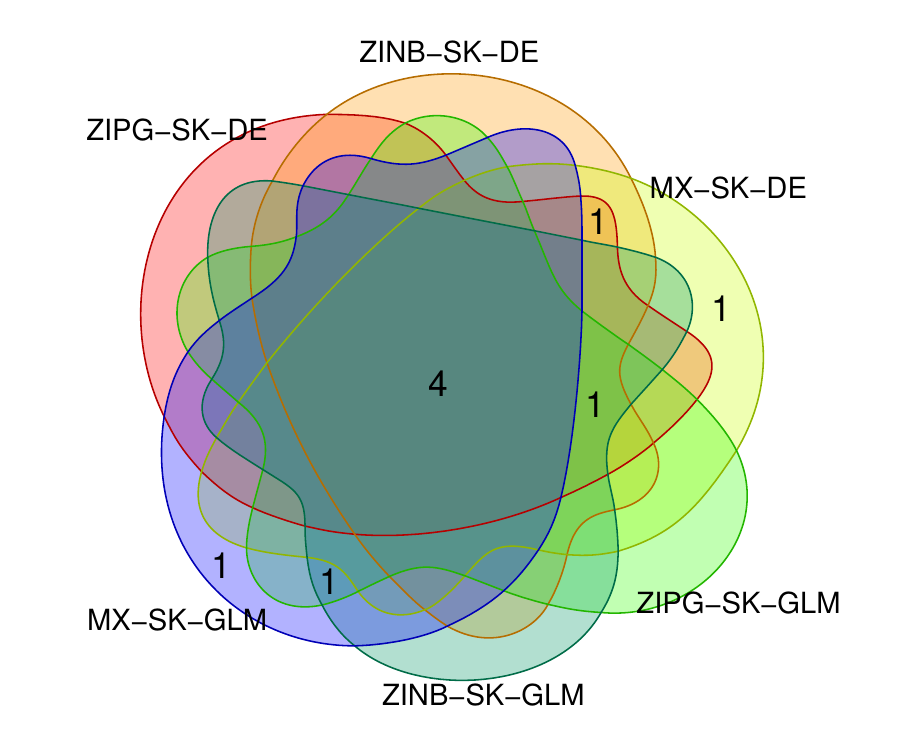} \vspace{4pt}
    \includegraphics[width=1\linewidth]{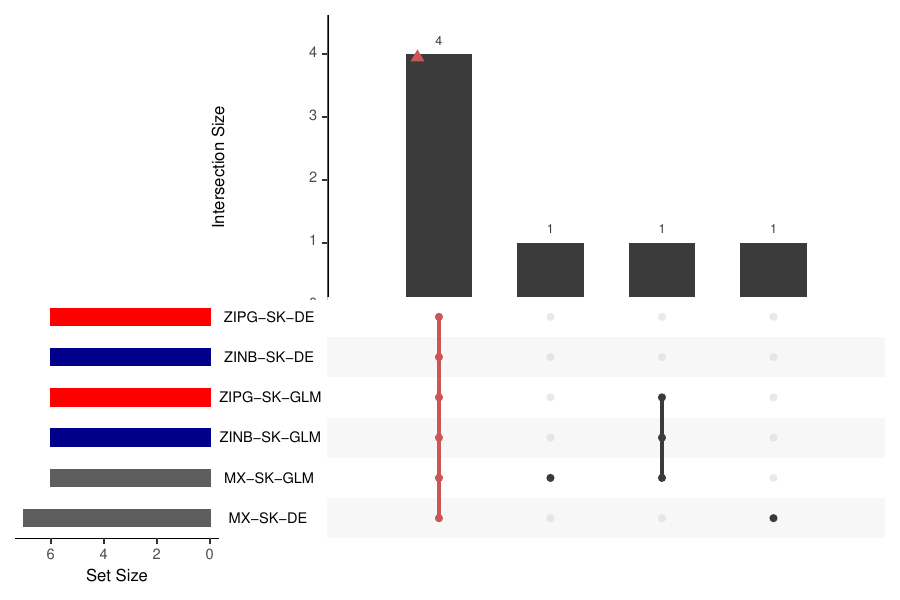}
    \end{minipage}}
    \subfloat[Other methods]{
    \begin{minipage}[c]{0.22\textwidth}
    \includegraphics[width=1\linewidth]{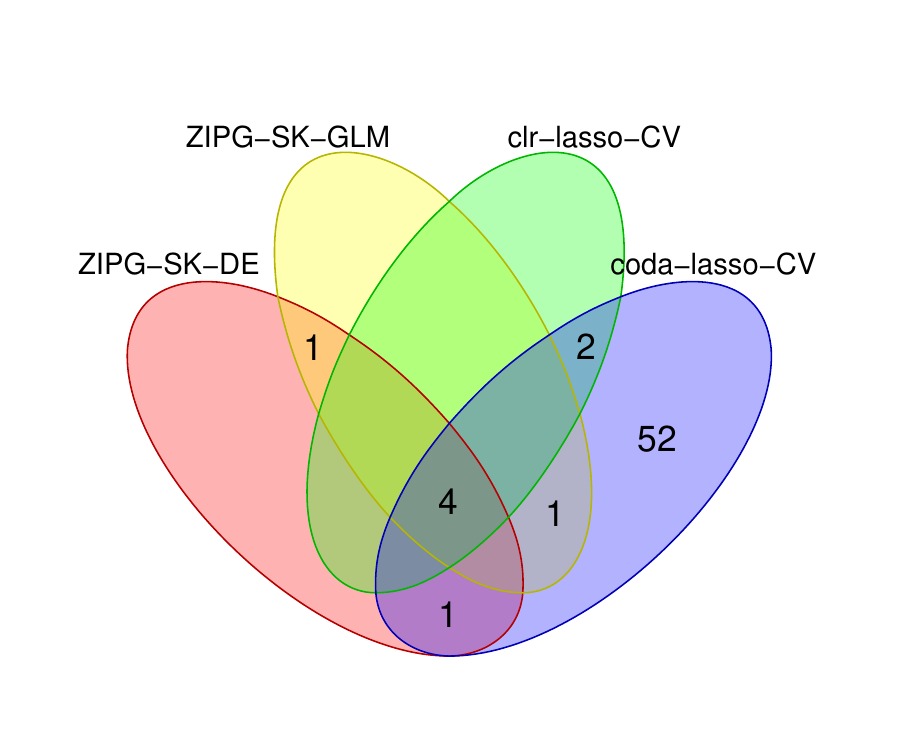} \vspace{4pt}
    \includegraphics[width=1\linewidth]{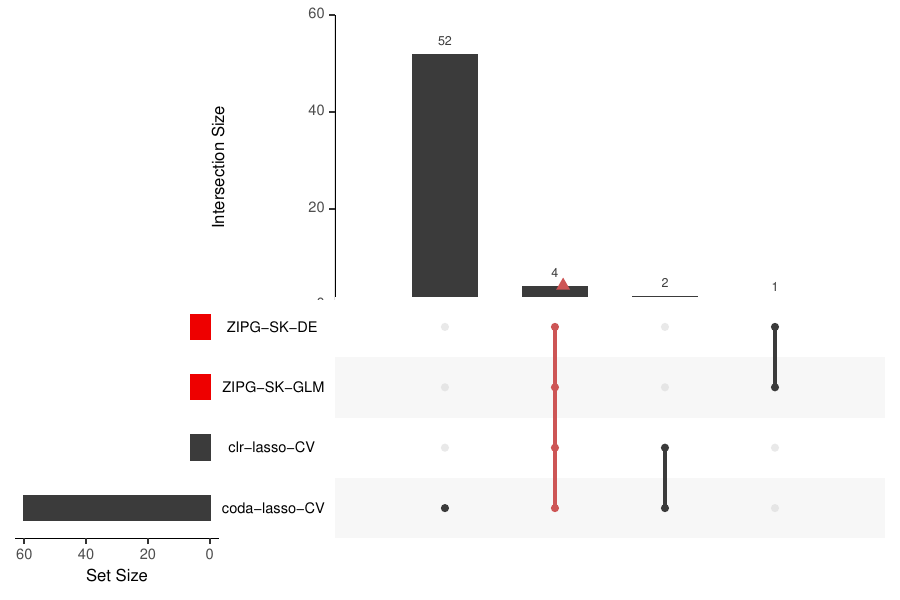}
    \end{minipage}}
    \caption{ 
    The number and overlap of selected gut microbial variables under different methods and strategies. Panel (a) displays the results obtained using various method strategies explored in this paper. ZIPG, ZINB, and MX represent the data generation methods, while DE, GLM, and RF represent the methods used for test statistic calculation. The Inter, Pool, and SK signify the Intersection, Pooling, and Simultaneous strategies of variable set selection, respectively. Panel (b) presents the results obtained using the Simultaneous strategy. Panel (c) presents a comparison between the final method (ZIPG-SK-DE, ZIPG-SK-GLM) and the two penalty regression methods discussed in this paper.}
    \label{fig_real}
\end{figure}
\vspace{-6pt}


The ZIPG-SK methods successfully identify six gut microbes at the species level associated with CRC across various regions. Five of these identified species exhibited consistent overlap across methods: \textit{Fusobacterium nucleatum} (F. nucleatum), \textit{Gemella morbillorum} (G. morbillorum), \textit{Parvimonas micra} (P. micra), \textit{Parvimonas unclassified} (P. unclassified), and \textit{Peptostreptococcus stomatis} (P. stomatis). Among these, \textit{F. nucleatum}, \textit{G. morbillorum}, and \textit{P. micra} are consistent with findings reported by \citet{liu2022multi}. While we note that \citet{liu2022multi}'s feature selection, based on pooling, resulted in the selection of more gut microbes, our findings for \textit{P. stomatis} and \textit{F. nucleatum} align with \citet{shuwen2022using}. The consistent selection of \textit{P. unclassified}, despite lacking similar results in existing literature, suggests a potential guiding role for further etiological research.

\vspace{-3pt}
Further comparison (Figure~\ref{fig_real} (c)) is made between the ZIPG-SK methods (ZIPG-SK-DE and ZIPG-SK-GLM) and two penalized regression benchmarks, clr-lasso and coda-lasso. Clr-lasso selected 6 gut microorganisms, while coda-lasso selected 60 microorganisms, with an overlap of 4 microorganisms across the four methods. We performed a classification diagnosis of CRC based on the gut microbial selection results from the four methods (Table S.2). Although the method proposed in this paper selects relatively fewer gut microbes, the classification accuracy of all four methods is comparable, confirming the strong efficacy of the presented method in subsequent diagnostic tasks. In summary, the proposed method effectively identifies common gut microbial markers from diverse sources using multi-country gut microbial datasets. The analysis above utilizes the Simultaneous method in the OSFF mode of Direct Diff Dot product, while the results based on other OSFF cases are discussed in Supplementary Materials S4.4.

\vspace{-3pt}
\textit{Robustness and Generalizability.} To assess the ZIPG-SK method's robustness, we applied it to the T2D dataset, successfully identifying differentially expressed genes (DEGs) across six distinct cell types, including known markers ($\textit{INS}$, $\textit{CLPS}$ \citep{martinez2022effect}), and selecting a more concise set of DEGs compared to clr-lasso and coda-lasso (Figure S.11). This confirms the framework's improved ability to capture shared biological signals from highly heterogeneous data.





\vspace{-24pt}
\section{Conclusion} \label{s:conclusion}
\vspace{-6pt}

The ZIPG-SK framework represents a versatile and rigorous solution for variable selection in multi-source count data. Our core contribution lies in utilizing the Gaussian copula and ZIPG distribution to accurately model and synthesize complex count features while effectively incorporating covariate information within the multi-source knockoff framework. By leveraging the general knockoff method for statistical calculation and FDR control, the ZIPG-SK approach successfully addresses the conditional association testing problem across heterogeneous datasets.
Simulation results and real-data applications on gut microbiome and single-cell RNA-seq data consistently demonstrate the superior performance of our proposed method in controlling the FDR and improving power compared to existing Simultaneous Knockoff methods designed for count data. In the real-data context, ZIPG-SK successfully identified a concise and biologically relevant set of core features, validating its efficacy in capturing robust signals from diverse sources. 

\vspace{-3pt}
The current work suggests several compelling avenues for future research and extension. For instance, our analysis (Figure S.6) indicates that a limited number of clinical factors may not substantially influence the final variable selection results in high-dimensional settings, possibly due to low correlation with the true microbial signals. Future work could thus focus on developing specialized knockoff constructions that more effectively leverage high-dimensional or complex covariate structures. 
Furthermore, the ZIPG-SK methodology is inherently flexible and can be readily adapted to scenarios involving continuous or mixed response variables. 
Finally, since count features often exhibit hierarchical or inherent grouping structures (e.g., family or species levels), extending the method to perform group-level or multi-level variable selection would be valuable, potentially building upon frameworks like the Multilayer Knockoff Filter (MKF) \citep{katsevich2019multilayer}, to delve deeper into the selection of count feature groups. 
Our ZIPG-SK framework may serve as a robust foundation for principled variable selection in complex multi-source count data analysis.

\backmatter

\vspace{-36pt}
\section*{Acknowledgements}
This work was supported by the National Natural Science Foundation of China [12401381], and National Social Science Foundation Project [20BTJ035]. \vspace*{-8pt}

\vspace{-18pt}
\section*{Data and Code Availability}
\vspace*{-6pt}
The metagenomic sequencing data are available in the Sequence Read Archive (\href{https://www.ncbi.nlm.nih.gov/sra}{SRA}) and the European Nucleotide Archive (\href{https://www.ncbi.nlm.nih.gov/sra}{SRA}). The single-cell diabetes dataset is sourced from the European Molecular Biology Laboratory-European Bioinformatics Institute (\href{https://www.ebi.ac.uk/gxa/sc/experiments/E-MTAB-5061/downloads}{EMBL-EBI}). An R package ZIPG-SK implementing our method are available from \href{https://github.com/tsnm1/ZIPGSK}{https://github.com/tsnm1/ZIPGSK}.

\vspace{-24pt}
\bibliographystyle{biom} 
\bibliography{biomsample_bib}

\vspace{-24pt}
\section*{Supporting Information}
\vspace*{-12pt}
The supporting materials is available with this paper at the Biometrics website on Wiley Online Library.\vspace*{-6pt}

\label{lastpage}
\end{document}